\begin{document}

\newcommand{\twoconn}{{\sc{2-Disjoint Connected Subgraphs}}\xspace}
\newcommand{\lce}{{\sc{Line Cluster Embedding}}\xspace}
\newcommand{\adp}{{\sc{Acyclic Digraph Partition}}\xspace}
\newcommand{\setsplitting}{{\sc{Set Splitting}}\xspace}
\newcommand{\colourhittingset}{{\sc{Colour Hitting Set}}\xspace}
\newcommand{\tcnfsat}{$3${\sc{-CNF-SAT}}\xspace}
\newcommand{\sat}{{\sc{SAT}}\xspace}
\newcommand{\czas}[1]{t(#1)}
\newcommand{\koszt}[1]{T(#1)}
\newcommand{\tudu}[1]{{\bf{TODO: #1}}}
\newcommand{\eps}{\varepsilon}
\newcommand{\pinezka}{\textrm{nil}}
\newcommand{\pred}{{\ensuremath{pred}}}
\newcommand{\inn}{{{\rm in}}}
\newcommand{\out}{{{\rm out}}}
\newcommand{\succc}{{\ensuremath{succ}}}
\newcommand{\ff}{\varphi}
\newcommand{\Ii}{{\ensuremath{\mathcal{I}}}}
\newcommand{\cF}{{\ensuremath{\mathcal{F}}}}
\newcommand{\cS}{{\ensuremath{\mathcal{S}}}}
\newcommand{\N}{{\ensuremath{\mathbb{N}}}}
\newcommand{\R}{{\ensuremath{\mathbb{R}}}}
\renewcommand{\subset}{{\subseteq}}
\newcommand{\unlesscompass}{\ensuremath{\textrm{NP} \subseteq \textrm{coNP}/\textrm{poly}}}
\newcommand{\Ohstar}{O^\star}
\newcommand{\Thraves}{Thraves}

\newtheorem{theorem}{Theorem}[section]
\newtheorem{lemma}[theorem]{Lemma}
\newtheorem{definition}[theorem]{Definition}
\newtheorem{corollary}[theorem]{Corollary}
\newtheorem{remark}[theorem]{Remark}
\newtheorem{proposition}[theorem]{Proposition}

\newcommand{\defproblemnoparam}[3]{
  \vspace{1mm}
\noindent\fbox{
  \begin{minipage}{\textwidth}  
  #1 \\ 
  {\bf{Input:}} #2  \\
  {\bf{Task:}} #3
  \end{minipage}
  }
  \vspace{1mm}
}

\author{Marek Cygan\thanks{Institute of Informatics, University of Warsaw, Poland, \texttt{cygan@mimuw.edu.pl}} \and
	Marcin Pilipczuk\thanks{Institute of Informatics, University of Warsaw, Poland, \texttt{malcin@mimuw.edu.pl}} \and
	Micha\l{} Pilipczuk\thanks{Department of Informatics, University of Bergen, Norway, \texttt{michal.pilipczuk@ii.uib.no}} \and
	Jakub Onufry Wojtaszczyk\thanks{Google Inc., Warsaw, Poland, \texttt{onufry@google.com}}}
\date{}

\title{Sitting closer to friends than enemies, revisited}

\maketitle

\begin{abstract}
Signed graphs, i.e., undirected graphs with edges labelled with a plus or minus sign,
are commonly used to model relationships in social networks.
Recently, Kermarrec and Thraves~\cite{francuzi} initiated the study
of the problem of appropriately visualising the network:
They asked whether any signed graph can be embedded into the metric space $\R^l$
in such a manner that every vertex is closer to all its friends (neighbours via positive edges)
than to all its enemies (neighbours via negative edges).
Interestingly, embeddability into $\R^1$ can be expressed as a purely combinatorial problem. In this paper we pursue a deeper study of this particular case, answering several questions posed by Kermarrec and Thraves.

First, we refine the approach of Kermarrec and Thraves for the case of complete signed graphs
by showing that the problem is closely related to the recognition of proper interval graphs.
Second, we prove that the general case, whose polynomial-time tractability remained open,
is in fact $NP$-complete.
Finally, we provide lower and upper bounds for the time complexity of the general case:
we prove that the existence of a subexponential time
(in the number of vertices and edges of the input signed graph)
algorithm would violate the Exponential Time Hypothesis,
whereas a simple dynamic programming approach gives a running time single-exponential in the number of vertices.
\end{abstract}

\section{Introduction}

Undirected graphs with edges labelled positively (by a~$+$)
and negatively (by a~$-$), called {\em signed graphs},
in many applications serve as a very simple model of
relationships between a group of people, e.g., in a social network.
Sign labels can express in a simplified way mutual relations,
like staying in a relationship, family bonds or conflicts, by classifying them
either as {\em friendship} ($+$ edge), {\em hostility} ($-$ edge)
or {\em ambivalence} (no edge).
In particular, much effort has been put into properly understanding and
representing the structure of the network, balancing it or naturally partitioning
into clusters \cite{signed0,signed1,signed2,signed2ipol,signed3,signed4,signed5,signed6}.
One of the problems is to visualize the model graph properly, i.e.,
in such a way that positive relations tend to make vertices be placed close to each other,
while negative relations imply large distances between vertices.

In their recent work, Kermarrec and Thraves~\cite{francuzi} formalized this
problem as follows:
Consider the metric space $\R^l$ with the Euclidean metric denoted by $d$.
Given a signed graph $G$, is it possible to embed the vertices of $G$
in $\R^l$ so that for any positive edge $uu_1$ and negative edge $uu_2$
it holds that $d(u, u_1) < d(u, u_2)$?
This question has a natural interpretation:
we would like to place a group of people so that every person
is placed closer to his friends than to his enemies. 

The work of Kermarrec and Thraves \cite{francuzi} concentrated on showing
a number of examples and counterexamples for embeddability into spaces of small
dimensions ($1$ and $2$) and a deeper study of the $1$-dimensional case.
Interestingly enough, the case of the Euclidean line has an equivalent
formulation in the language of pure combinatorics:
Given a signed graph $G$, is it possible to order the vertices of $G$
so that for any positive edge $uw$
there is no negative edge $uv$ with $v$ laying between $u$ and $w$?
The authors made algorithmic use of this combinatorial insight:
Providing the given signed graph is complete
(i.e., every pair of vertices is adjacent via a positive or negative edge)
they show a polynomial-time algorithm that computes an embedding into a line
or reports that no such embedding exists.

Kermarrec and Thraves also posed a number of open problems in the area,
including the question of the complexity of determining the embeddability
of an arbitrary (not necessarily complete) graph into the Euclidean line.

\paragraph{Our results.}
In this paper, we focus on the problem of embedding a signed graph into a line. The reformulation of the $1$-dimensional case, proven by Kermarrec and Thraves, turns out to be an interesting combinatorial problem, which allows classical methods of analysis and shows interesting links with the class of proper interval graphs.
 
We begin with refining the result of Kermarrec and Thraves for the case of complete graphs.
We prove that a complete signed graph is embeddable into a line
if and only if the graph formed by the positive edges is a proper interval graph.
Using this theorem one can immediately transfer all the results
from the well-studied area of proper interval graphs into our setting.
Most importantly, as recognition of proper interval graphs can be performed in
linear-time~\cite{pr-int-recognition},
we obtain a simpler algorithm for determining the embedability of a complete
graph into a line, with a linear runtime.

We next analyse the general case.
We resolve the open problem posed in~\cite{francuzi} negatively:
it is $NP$-complete to resolve whether a given signed graph can be embedded into a line.
This hardness result also answers other questions of Kermarrec and Thraves \cite{francuzi}.
For example, we infer that it is $NP$-hard to decide the smallest dimension
of a Euclidean space in which the graph can be embedded, as such an algorithm could be used to test embeddability into a line.

Furthermore, we are able to show a lower bound on the time complexity
of resolving embeddability into a line, 
under a plausible complexity assumption.
We prove that obtaining an algorithm running in subexponential time
(in terms of the total number of vertices and edges of the input graph)
would contradict the {\em Exponential Time Hypothesis} \cite{seth}
(see Section \ref{sec:prel} for an exact statement).
We complete the picture of the complexity of the problem by showing
a dynamic programming algorithm that runs in $\Ohstar(2^n)$ time
\footnote{The $\Ohstar()$ notation surpresses factors
that are polynomial in the input size.},
matching the aforementioned lower bound up to a constant
in the base of the exponent
($n$ denotes the number of vertices of the input graph).

\paragraph{Organisation of the paper.}
In Section~\ref{sec:prel} we recall widely known notions and facts
that are of further use,
and provide the details of the combinatorial reformulation of the problem
by Kermarrec and Thraves \cite{francuzi}.
Section~\ref{sec:comp} is devoted to refinements in the analysis
of the case of the complete signed graphs,
while Section~\ref{sec:gen} describes upper and lower bounds
for the complexity of the general case.
Finally, in Section~\ref{sec:conc} we gather conclusions and ideas for further work.

\section{Preliminaries}\label{sec:prel}

\paragraph{Basic definitions.}
For a finite set $V$, by an {\em{ordering}} of $V$ we mean a bijection
$\pi: V \to \{1,2,\ldots,|V|\}$.
We sometimes treat an ordering $\pi$ as a linear order on $V$
and for $u,v \in V$ we write $u \leq_\pi v$ to denote $\pi(u) \leq \pi(v)$.

In a graph $G=(V,E)$ the {\em neighbourhood} of a vertex $v$,
denoted $N(v)$, is the set of all its neighbours,
i.e., $\{w: vw\in E\}$.
The {\em closed neighbourhood} of $v$ is defined as $N[v]=N(v)\cup \{v\}$.

A {\em signed graph} is a triple $G=(V,E^+,E^-)$,
where $E^+,E^-\subseteq V^{[2]}$ and $E^+\cap E^-=\emptyset$.
We view a signed graph as an undirected simple graph
with two possible labels on the edges:
positive ($+$) and negative ($-$).
We call the edges from $E^+$ positive, while those from $E^-$ --- negative.
The graph $G^+=(V,E^+)$ is called the {\em positive part} of $G$,
and $G^-=(V,E^-)$ --- the {\em negative part}.
A signed graph is called {\em complete} if $E^+\cup E^-=V^{[2]}$,
i.e., every pair of vertices is adjacent via a positive or negative edge.

\paragraph{Proper interval graphs.}
Let $G=(V,E)$ be an undirected graph, $\Ii$ be a family of size $|V|$
of intervals on real line with nonempty interiors
and pairwise different endpoints and $\iota:V\to \Ii$ be any bijection.
We say that $\Ii$ is an {\em interval model} for $G$ if for every $v,w\in V$,
$v\neq w$, $vw\in E$ is equivalent to $\iota(v)\cap \iota(w)\neq \emptyset$.
$\Ii$ is a {\em proper interval model} if, additionally,
none of the intervals is entirely contained in any other.
Graphs having an interval model are called {\em interval graphs},
while if a proper interval model exists as well,
we call them {\em proper interval graphs}.
We will omit the mapping $\iota$ whenever it is clear from the context.

\paragraph{Exponential Time Hypothesis} \cite{seth}:
The {\em Exponential Time Hypothesis} (ETH for short)
asserts that there exists a constant $C > 0$
such that no algorithm solving the \tcnfsat{} problem in $O(2^{Cn})$ exists,
where $n$ denotes the number of variables in the input formula.

\paragraph{Combinatorial problem statement.}
In~\cite{francuzi}, Kermarrec and Thraves work with the metric definition of the problem:
Given a signed graph $G=(V,E^+,E^-)$ a feasible embedding of $G$ in the Euclidean space $\R^l$
is such a function $f:V\to \R^l$ that for all $u_1,u_2,u$, if $u_1u\in E^+$ and $u_2u\in E^-$,
then $d(f(u_1),f(u))<d(f(u_2),f(u))$
(recall that $d$ stands for the Euclidean distance in $\R^l$).
However, for the $1$-dimensional case they have in essence proved the following result:
\begin{theorem}[Lemmata $3$ and $4$ of~\cite{francuzi}, rephrased]
A signed graph $G=(V,E^+,E^-)$ has a feasible embedding in a line
iff there is an ordering $\pi$ of $V$ such that for every $u\in V$:
\begin{itemize}
\item[{\em (i)}] there are no $u_1<_\pi u_2<_\pi u$ such that $u_1u\in E^+$ and $u_2u\in E^-$;
\item[{\em (ii)}] there are no $u_1>_\pi u_2>_\pi u$ such that $u_1u\in E^+$ and $u_2u\in E^-$.
\end{itemize}
\end{theorem}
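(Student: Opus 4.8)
The statement is an equivalence, so the plan is to prove the two implications separately; the easy one goes from a metric embedding to an ordering, and essentially all the work is in the converse.

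For the direction ``feasible embedding $\Rightarrow$ good ordering,'' suppose $f:V\to\R$ is a feasible embedding (so $l=1$). Since there are only finitely many constraints and each is a strict inequality $|f(u_1)-f(u)|<|f(u_2)-f(u)|$, there is a uniform slack $\delta>0$, so I can perturb the coordinates generically by less than $\delta/4$ to make them pairwise distinct while preserving every strict inequality. Let $\pi$ be the ordering obtained by sorting these distinct values. If (i) failed, we would have $u_1<_\pi u_2<_\pi u$ with $u_1u\in E^+$ and $u_2u\in E^-$, hence $f(u_1)<f(u_2)<f(u)$ and therefore $|f(u)-f(u_1)|>|f(u)-f(u_2)|$, contradicting feasibility for the pair $(u_1,u_2)$; condition (ii) is symmetric. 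So $\pi$ satisfies (i) and (ii).

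For the converse, fix an ordering $v_1<_\pi\dots<_\pi v_n$ satisfying (i) and (ii) and try to realise it by coordinates $x_1<\dots<x_n$ with $x_j=f(v_j)$. The first observation is that \emph{same-side} comparisons are free: if a positive neighbour and a negative neighbour of $u$ lie on the same side of $u$, then by (i) (respectively (ii)) the positive one is nearer in $\pi$, hence nearer in distance for \emph{any} choice of positive gaps $x_{j+1}-x_j$. Thus the only genuine constraints are the \emph{cross-side} ones: for each $u$ the farthest positive neighbour on one side must be strictly closer than the nearest negative neighbour on the other side. Note that a naive uniform spacing $x_j=j$ does not work (a far positive neighbour on the left can beat a near negative neighbour on the right), so the gaps must be chosen using the edge structure.

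I would build the coordinates greedily from right to left: place $v_n$, then $v_{n-1}$, and so on, placing each new vertex as far to the right as its constraints allow. When $v_i$ is inserted to the left of the already-placed $v_{i+1},\dots,v_n$, the binding conditions simplify nicely. For a positive neighbour $v_k$ ($k>i$), any negative neighbour of $v_k$ lying further left than $v_i$ is automatically farther than $v_i$, and by (i) all left negative neighbours of $v_k$ do lie further left; hence the only real requirement is $\mathrm{dist}(v_i,v_k)<\rho^-(v_k)$, the distance from $v_k$ to its nearest \emph{right} negative neighbour, which is already fixed. This yields a lower bound $x_i>x_k-\rho^-(v_k)$. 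Symmetrically, for a negative neighbour $v_k$ the only nontrivial requirement is that $v_k$'s farthest right positive neighbour, at distance $\rho^+(v_k)$, be nearer than $v_i$, giving an upper bound $x_i<x_k-\rho^+(v_k)$; together with $x_i<x_{i+1}$ these describe an interval for $x_i$.

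The main obstacle is exactly the crux lemma that this interval is nonempty at every step, i.e.\ that $\max_k\!\big(x_k-\rho^-(v_k)\big)$ over positive neighbours is smaller than $x_{i+1}$ and than $\min_{k'}\!\big(x_{k'}-\rho^+(v_{k'})\big)$ over negative neighbours. This is where conditions (i) and (ii) must be used in full, together with the invariant that each earlier vertex was placed as far right as possible (so positive reaches stay short); the key point is that a configuration making a lower bound exceed an upper bound would force a forbidden ``positive before negative'' pattern around some already-placed vertex, exactly the pattern excluded by (i)/(ii) (a small $n$ check confirms that such conflicts are precisely the invalid orderings). I expect this nonemptiness argument, carried out by induction together with a short case analysis on which side each neighbour lies, to be the entire technical content; once it is established, the greedy construction produces increasing coordinates satisfying every cross-side constraint, hence a feasible embedding, completing the equivalence.
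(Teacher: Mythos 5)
First, a point of reference: the paper itself does not prove this statement --- it is imported verbatim (``Lemmata 3 and 4 of~\cite{francuzi}, rephrased''), so your attempt can only be judged on its own merits rather than against an in-paper argument. Your forward direction is fine: since all defining constraints are finitely many strict inequalities, a sub-slack perturbation makes coordinates distinct while preserving feasibility, and sorting yields an ordering satisfying (i) and (ii). In the converse you also correctly isolate the real content: same-side comparisons are free for any order-respecting placement by (i)/(ii), and only cross-side constraints bind, so a naive uniform spacing fails and the gaps must be engineered.

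However, the converse as written has a genuine gap, and it sits exactly at the crux. You reduce the construction to a right-to-left greedy in which each $x_i$ must lie in an interval $\bigl(\max_p (x_p-\rho^-(v_p)),\, \min\bigl(x_{i+1}, \min_q (x_q-\rho^+(v_q))\bigr)\bigr)$, and then you \emph{assert} rather than prove that this interval is nonempty, writing that you ``expect this nonemptiness argument \ldots to be the entire technical content.'' That expectation is correct --- it \emph{is} the entire technical content of the theorem --- so deferring it leaves the hard direction unproved; ``a small $n$ check'' is not a substitute for the induction. Two concrete problems: (a) the rule ``place each vertex as far to the right as its constraints allow'' is ill-defined, because every constraint is strict and the feasible interval is open, so no rightmost position exists; any repair (say, upper bound minus a shrinking margin) changes the very invariant (``positive reaches stay short'') that your nonemptiness argument leans on, and that invariant is never formulated quantitatively. (b) Nonemptiness is not a pointwise consequence of (i)/(ii): for instance, the lower bound $x_p-\rho^-(v_p)$ arising from a positive neighbour $v_p$ of $v_i$ must be shown to lie strictly below $x_{i+1}$ even when $v_{i+1}$ is \emph{not} adjacent to $v_p$, in which case conditions (i)/(ii) applied at $v_p$, $v_i$ or $v_{i+1}$ individually forbid nothing, and one needs an inductive hypothesis controlling how far left earlier placements may have drifted. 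Until you state and prove such an invariant (or replace the greedy by an explicit coordinate formula with controlled gap growth, in the spirit of the interval construction $I_v=\bigl[\pi(v),\pi(v^\lra)+\frac{\pi(v)}{|V|+1}\bigr]$ used in Theorem~\ref{thm:pr-int} of this paper), the proof establishes only the easy implication.
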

We will jointly call conditions {\em (i)} and {\em (ii)} {\em the condition imposed on $u$}.
Somewhat abusing the notation,
the ordering $\pi$ will also be called {\em an embedding of $G$ into the line}.
Therefore, from now on we are working with the following combinatorial problem
that is equivalent to the version considered by Kermarrec and Thraves:

\defproblemnoparam{\lce}
{A signed graph $G=(V,E^+,E^-)$.}
{Does there exist an ordering $\pi$ on $V$ such that for every $u\in V$:
{\em (i)} there are no $u_1<_\pi u_2<_\pi u$ such that $u_1u\in E^+$ and $u_2u\in E^-$;
{\em (ii)} there are no $u_1>_\pi u_2>_\pi u$ such that $u_1u\in E^+$ and $u_2u\in E^-$.}

\newcommand{\rla}{\leftarrow}
\newcommand{\lra}{\rightarrow}
\newcommand{\pos}[1]{\pi(#1)}

\section{The complete signed graph case}\label{sec:comp}

In their work, Kermarrec and Thraves~\cite{francuzi} announced
a polynomial-time algorithm solving the \lce problem in the case
where the input signed graph is complete.
Their line of reasoning was essentially as follows:
if a signed graph can be embedded into a line,
then its positive part has to be chordal.
However, for a connected chordal graph with at least $4$ vertices
that actually is embeddable into a line,
every {\em perfect elimination ordering} of the graph
is a feasible solution.
Therefore, having checked that the graph is chordal
and computed a perfect elimination ordering of every connected component,
we can simply verify whether the obtained ordering is a correct line embedding.

We refine the approach of Kermarrec and Thraves by showing
that a complete graph has a line embedding if and only if
its positive part is a proper interval graph.
Recall that proper interval graphs are a subclass of chordal graphs;
therefore, the result nicely fits into the picture of Kermarrec and Thraves.
Moreover, the theory of proper interval graphs is well-studied,
so many results from that area can be immediately translated to our setting.
For instance, many NP-complete problems become solvable in polynomial time
on proper interval graphs (e.g., \cite{martin,proper:hamilton,proper:longestpath,proper:kcluster}),
and the linear-time algorithm of Corneil et al.~\cite{pr-int-recognition}
for recognizing proper interval graphs immediately solves the \lce problem in
linear time in case of a complete signed graph.

\begin{theorem}\label{thm:pr-int}
A complete signed graph $G=(V,E^+,E^-)$ is embeddable in $\R^1$ if and only if
$G^+=(V,E^+)$ is a proper interval graph.
Moreover, having a feasible ordering $\pi$ of vertices of $V$,
a proper interval model of $G^+$ sorted with respect to the left ends of the intervals
can be computed in linear time;
conversely, having a proper interval model of $G^+$ sorted with respect
to the left ends of the intervals,
we can compute a feasible ordering $\pi$ in linear time.
\end{theorem}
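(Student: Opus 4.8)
The plan is to route the argument through the classical vertex-ordering characterisation of proper interval graphs. The crucial first step is to reformulate the embedding condition in purely neighbourhood terms, and here completeness of $G$ is essential. Fix an ordering $\pi$ and a vertex $u$, and write $N^+[u]$ for the closed neighbourhood of $u$ in $G^+$. Since $G$ is complete, every vertex other than $u$ is joined to $u$ by a positive or a negative edge, so $V \setminus N^+[u]$ is exactly the set of negative neighbours of $u$. I claim that $\pi$ satisfies the condition imposed on $u$ for every $u$ if and only if $N^+[u]$ occupies a consecutive block of positions in $\pi$ (an interval of the order containing $u$) for every $u$. For the forward implication, suppose $N^+[u]$ is not consecutive, so there are $v <_\pi w <_\pi x$ with $v,x \in N^+[u]$ and $w \notin N^+[u]$, i.e.\ $wu \in E^-$; as $u \in N^+[u]$ we have $w \neq u$, and positioning $u$ relative to $w$ yields a violation of either (i) (if $w <_\pi u$, via $u_1=v$, $u_2=w$) or (ii) (if $u <_\pi w$, via $u_1=x$, $u_2=w$). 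Conversely, if all closed neighbourhoods are consecutive then any candidate witness $u_1 <_\pi u_2 <_\pi u$ of (i) has $u_1,u \in N^+[u]$ with $u_2$ strictly between them, forcing $u_2 \in N^+[u]$ and contradicting $u_2u \in E^-$; case (ii) is symmetric.

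With this reformulation in hand, the first assertion of the theorem follows from Roberts' characterisation: a graph is a proper interval graph exactly when its vertices admit a linear ordering in which the closed neighbourhood of every vertex is a consecutive block. Hence $G$ is embeddable in $\R^1$ iff such an ordering of $G^+$ exists iff $G^+$ is a proper interval graph. If a self-contained argument is preferred, one checks that the \emph{umbrella property} --- $u <_\pi w <_\pi x$ and $ux \in E^+$ imply $uw, wx \in E^+$ --- is equivalent to all closed neighbourhoods being consecutive, and is precisely the indifference-ordering definition of proper interval graphs.

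For the algorithmic ``moreover'' statement I would first establish a monotonicity lemma: in any ordering $v_1 <_\pi \cdots <_\pi v_n$ in which every $N^+[v_i]$ is consecutive, writing $N^+[v_i] = \{v_{l_i},\ldots,v_{r_i}\}$, both $l_i$ and $r_i$ are nondecreasing in $i$. Indeed, if $i<j$ but $r_i > r_j$, then $v_iv_{r_i} \in E^+$ with $i<j<r_i$ forces $v_jv_{r_i}\in E^+$ by the umbrella property, so $r_j \geq r_i$, a contradiction; the argument for $l_i$ is symmetric. Given a feasible $\pi$, assign to $v_i$ the interval with left endpoint $i$ and right endpoint chosen inside $(r_i, r_i+1)$, breaking ties so that all $2n$ endpoints are distinct and the right endpoints are strictly increasing in $i$ (possible because the $r_i$ are nondecreasing). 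Monotonicity of both endpoints guarantees that no interval contains another, so this is a proper interval model; as the left endpoints are $1,2,\ldots,n$, it is already sorted by left ends, and $v_iv_j \in E^+$ (with $i<j$) holds iff $j \leq r_i$ iff the two intervals meet, giving correctness. The construction is a single left-to-right scan, hence linear. Conversely, given a proper interval model sorted by left endpoints, ordering the vertices by their left endpoints produces $\pi$ in linear time; in a proper model this order coincides with the order by right endpoints, so by monotone sweeping every closed neighbourhood is consecutive, and by the reformulation above $\pi$ is feasible.

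The conceptual heart is the first paragraph --- the completeness-driven equivalence between the embedding condition and consecutive closed neighbourhoods --- after which the characterisation is essentially a citation. I expect the main technical care to lie in the ``moreover'' part: proving the monotonicity lemma cleanly and then choosing the endpoints so that the resulting model is simultaneously proper, has pairwise distinct endpoints, correctly realises $G^+$, and is produced in genuinely linear time while staying sorted by left ends.
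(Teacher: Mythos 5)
Your proposal is correct, and its constructive core coincides with the paper's: your interval model (left endpoint $i$ for the $i$-th vertex, right endpoint chosen in $(r_i,r_i+1)$ with fractional tie-breaking increasing in $i$) is essentially the paper's $I_v=\left[\pi(v),\pi(v^{\rightarrow})+\frac{\pi(v)}{|V|+1}\right]$, where $v^{\rightarrow}$ is the last positive closed neighbour of $v$ under $\pi$, and your converse (take the ordering by left endpoints of a proper model) is exactly the paper's first direction. What differs is the logical organisation. You first prove, using completeness, that an ordering is feasible if and only if every closed neighbourhood in $G^+$ is a consecutive block, and then invoke Roberts' characterisation of proper interval graphs, so the equivalence itself becomes a citation and the explicit constructions are needed only for the algorithmic ``moreover'' clause; the paper never states the consecutive-neighbourhood characterisation, instead verifying feasibility of the left-end ordering directly by an interval-overlap argument and proving non-containment of the constructed intervals by a direct contradiction with feasibility, where you isolate a monotonicity lemma for the $r_i$ via the umbrella property --- same mathematical content, differently packaged. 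Your route buys modularity and an explicit bridge to the proper-interval literature, and it pinpoints exactly where completeness enters (the forward direction of your reformulation, where $w\notin N^+[u]$ forces $wu\in E^-$); the paper's route buys self-containment, needing no external characterisation theorem. One small caveat: the umbrella/indifference-ordering characterisation is a theorem (Roberts; Looges--Olariu), not the \emph{definition} of proper interval graphs, but nothing hinges on this slip, since your two constructions already establish both implications from scratch, making the citation logically redundant.
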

\begin{proof}
First, let us assume that $G^+$ is a proper interval graph,
and let $\Ii=\{I_v:v\in V\}$ be a proper interval model of $G^+$.
Notice that as no interval is contained in another, we have a natural
order on $\Ii$ --- ordering the intervals with respect to the left
ends (or, equivalently, the right ends).
We claim that $\pi$ is a feasible solution for the \lce instance $G=(V,E^+,E^-)$.
Take any $u\in V$.
Assume that there were some $u_1<_\pi u_2<_\pi u$ such that $u_1u\in E^+$ and $u_2u\in E^-$;
this implies intervals $I_{u_1}$ and $I_{u}$ would overlap.
This, in turn, means that the right end of interval $I_{u_1}$
would be on the right of the left end of interval $I_u$.
Therefore, the left and right ends of $I_{u_2}$ are on different sides of the left end of $I_u$,
as $u_2<_\pi u$ and $u_1<_\pi u_2$, so $I_{u_2}$ and $I_u$ overlap.
This is a contradiction with $u_2u\notin E^+$.
A symmetrical argument for the second case finishes the proof in this direction.

Now let us assume that $G$ is embeddable in the line and let $\pi$,
an ordering of $V$, be a solution.
Moreover, let $v^\rla$ be the first (with respect to $\pi$) vertex in the closed neighbourhood of $v$ in $G^+$,
while let $v^\lra$ be the last.
Of course, $v^\rla\leq_\pi v\leq_\pi v^\lra$.
Let us define a family of intervals on $\R$:
let $I_v=\left[\pos{v},\pos{v^\lra}+\frac{\pos{v}}{|V|+1}\right]$ for $v\in V$
and $\Ii=\{I_v:v\in V\}$.
Observe that intervals $I_v$ have nonempty interior and pairwise different endpoints.
Now, we prove that (1) no $I_v$ is fully contained in some other $I_w$ and
(2) for all $v,w\in V$, $vw\in E^+$ if and only if $I_v\cap I_w\neq \emptyset$.
This suffices to show that $\Ii$ is a proper interval model for $G^+$.

In order to establish (1), let us assume the contrary:
there exists a pair of vertices $v,w$ such that $\pos{v}>\pos{w}$
and $\pos{v^\lra}+\frac{\pos{v}}{|V|+1}<\pos{w^\lra}+\frac{\pos{w}}{|V|+1}$.
Then $\pos{v^\lra}<\pos{w^\lra}$.
Therefore, by definition of $v^\lra$, $vw^\lra\in E^-$.
On the other hand, $ww^\lra\in E^+$ and $w<_\pi v<_\pi w^\lra$,
a contradiction with the assumption that $\pi$ was a proper embedding.

Now we proceed to the proof of (2).
Take any two distinct vertices $v,w$, without losing generality assume that $v<_\pi w$.
If $vw\in E^+$, then $\pos{v}<\pos{w}$
and $\pos{w}\leq \pos{v^\lra}<\pos{v^\lra}+\frac{\pos{v}}{|V|+1}$,
so $I_v$ and $I_w$ overlap.
On the other hand if $vw\in E^-$, then from the condition imposed on $v$ it follows that $w>_\pi v^\lra$.
Consequently, $\pos{w}>\pos{v^\lra}$ and, as $\pos{v}<|V|+1$,
also $\pos{w}>\pos{v^\lra}+\frac{\pos{v}}{|V|+1}$.
Therefore, in this case $I_v$ and $I_w$ do not overlap.
This proves $\Ii$ is in fact a proper interval model for $G^+$.

The algorithmic part of the theorem statement
follows directly from the presented constructions. 
\end{proof}

Let us recall the result of Corneil et al.~\cite{pr-int-recognition},
which states that proper interval graphs can be recognized in linear time
and the algorithm can also output an ordering of the vertices with 
respect to the left ends of intervals in some model.
We can pipeline this routine with Theorem~\ref{thm:pr-int} in order to obtain the following corollary:

\begin{theorem}
Assuming the input graph is complete and given as the set of positive edges,
\lce can be solved in $O(|V|+|E^+|)$ time complexity.
Moreover, the algorithm can produce a feasible ordering
of the vertices in the same time, if such an ordering exists.
\end{theorem}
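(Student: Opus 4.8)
The plan is to pipeline Theorem~\ref{thm:pr-int} with the recognition algorithm of Corneil et al.~\cite{pr-int-recognition}, observing that nowhere do we need to touch the (potentially quadratic) set of negative edges. First I would feed the positive part $G^+=(V,E^+)$ --- which is exactly the input, since the instance is handed to us as $V$ together with $E^+$ --- into the linear-time recognition routine. This routine runs in $O(|V|+|E^+|)$ time, decides whether $G^+$ is a proper interval graph, and, in the affirmative case, additionally returns an ordering of $V$ by the left ends of the intervals of some proper interval model.

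If the routine reports that $G^+$ is not a proper interval graph, then by the characterisation in Theorem~\ref{thm:pr-int} the complete signed graph $G$ admits no embedding into $\R^1$, so I would simply answer \emph{no}. If instead $G^+$ is recognised as a proper interval graph, the routine supplies the left-ends ordering $\pi$ of $V$. By the first direction of the proof of Theorem~\ref{thm:pr-int}, this very ordering is a feasible solution to the \lce instance $G$; hence I would output $\pi$ directly, with no further processing required.

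The crucial point for the claimed complexity --- and essentially the only thing to check --- is that completeness of $G$ is used purely through the identity $E^-=V^{[2]}\setminus E^+$, which never has to be materialised. Both the recognition of $G^+$ and the extraction of $\pi$ operate solely on $G^+$, so the entire procedure runs in $O(|V|+|E^+|)$ time, and it produces a feasible ordering whenever one exists. There is no genuine obstacle beyond verifying that the ordering delivered by Corneil et al.~is indeed sorted by the left ends of the intervals, which is precisely the form required by the conversion direction of Theorem~\ref{thm:pr-int}; everything else follows by directly composing the two results.
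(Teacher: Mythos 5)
Your proposal is correct and is essentially identical to the paper's argument: the paper likewise obtains this theorem by pipelining the linear-time proper interval recognition algorithm of Corneil et al.\ (which outputs the ordering by left interval ends) with Theorem~\ref{thm:pr-int}, whose first direction shows that this left-ends ordering is itself a feasible \lce solution. Your observation that the negative edges are never materialised, so the running time depends only on $|V|+|E^+|$, is precisely the point implicit in the paper's statement.
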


\section{The general case}\label{sec:gen}

\subsection{$NP$-completeness of the general case}

In \cite{francuzi} Kermarrec and Thraves asked whether the \lce problem
is also polynomial-time solvable in the case where the input is not restricted to complete graphs.
In this section we show that this is unlikely: in fact, the problem becomes $NP$-complete.
The proof consists of two steps.
First, using a reduction from the \setsplitting problem we show
that an auxiliary problem, called \adp, is $NP$-complete.
Next, we reduce \adp to \lce.
We believe that the \adp can turn out to be a useful pivot problem also in other hardness reductions.

\defproblemnoparam{\adp}{A directed graph $D=(V,A)$.}{
Is it possible to partition $V$ into two sets
$V_1$ and $V_2$, so that both $D[V_1]$ and $D[V_2]$ are directed acyclic graphs (DAGs)?}

\noindent Let us also recall the definition of the NP-complete \setsplitting problem \cite{garey-johnson}.

\defproblemnoparam{\setsplitting}{A set system $(\cF,U)$, where $\cF \subseteq 2^U$.}{
Does there exist a subset $X \subseteq U$
such that each set in $\cF$ contains both an element from $X$
and an element from $U\setminus X$?}

\begin{lemma}\label{lem:toadp}
There exists a polynomial-time algorithm
that given an instance $(\cF,U)$ of \setsplitting
outputs an equivalent instance $G=(V,A)$ of \adp,
for which $|V|=|U|+\sum_{F\in \cF} |F|$ and $|A|=3\sum_{F\in \cF} |F|$.
\end{lemma}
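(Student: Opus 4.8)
The plan is to build a gadget for each set $F \in \cF$ that forces, in any valid 2-coloring (partition into $V_1, V_2$), the elements of $F$ to not all receive the same color. I would introduce one vertex $x_u$ for each element $u \in U$; these are the vertices whose partition class will encode the membership indicator $X$. The vertex count $|V| = |U| + \sum_{F} |F|$ tells me that besides the element vertices I am allowed exactly $|F|$ extra ``gadget'' vertices per set $F$, and the arc count $|A| = 3\sum_F |F|$ says I get exactly three arcs per element-occurrence. This strongly suggests that each set $F = \{u_1, \ldots, u_k\}$ is encoded as a directed cycle of length $2k$ that alternates between the element vertices $x_{u_1}, \ldots, x_{u_k}$ and $k$ fresh private vertices $y_1^F, \ldots, y_k^F$, giving $k$ new vertices and $2k$ cycle arcs; the remaining $k$ arcs (to reach $3k$) would be short ``chord'' or ``doubling'' arcs used to pin down the orientation behavior.

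The key structural fact I want to exploit is: a single directed cycle $C$ lies entirely inside one part $D[V_i]$ as an induced subgraph forming a cycle if and only if all its vertices are in the same part, in which case that part is not acyclic. So if I make each set $F$ into one directed cycle through its element vertices, then ``$F$ is split by $X$'' should correspond exactly to ``the cycle for $F$ is broken across the two parts.'' The subtlety is that the alternating private vertices $y_j^F$ are free to go to either side, so I must ensure the gadget is rigid enough that a monochromatic element set $F$ genuinely forces a directed cycle within one part regardless of how the $y_j^F$ are colored. This is where the third arc per occurrence enters: I expect to replace each private vertex by a pair of oppositely-oriented arcs (a length-2 directed cycle / digon between $x_{u_j}$ and $y_j^F$, i.e. arcs $x_{u_j} \to y_j^F$ and $y_j^F \to x_{u_j}$), so that whichever side $y_j^F$ lands on, it forms a $2$-cycle with an adjacent element vertex of the same color, thereby acting as a local acyclicity constraint that propagates color information around $F$.

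Carrying this out, I would first fix the reduction explicitly: construct $D$, verify the vertex and arc counts match $|U| + \sum_F |F|$ and $3 \sum_F |F|$ exactly, and confirm the construction is polynomial. Then I would prove the two implications of equivalence. For the forward direction, given a splitting set $X$, I define $V_1 = \{x_u : u \in X\}$ together with a carefully chosen assignment of the $y_j^F$ vertices, and $V_2$ the complement, and argue directly that neither $D[V_1]$ nor $D[V_2]$ contains a directed cycle, using that every set $F$ contributes both a color-$1$ and a color-$2$ element so its long cycle is already severed, while the digon constraints can be satisfied by routing each $y_j^F$ to the side opposite to its associated $x_{u_j}$ appropriately. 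For the reverse direction, I assume a valid DAG-partition and read off $X = \{u : x_u \in V_1\}$, then show each $F$ must be split: if all element vertices of $F$ were the same color, the digons would force the whole alternating cycle of $F$ into one part, producing a directed cycle and contradicting acyclicity of that part.

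The main obstacle I anticipate is not the combinatorial idea but getting the orientation of the gadget arcs exactly right so that the equivalence holds in both directions simultaneously under the tight arc budget of $3k$ per set. In particular, the digon trick makes $y_j^F$ behave well, but I must double-check that the freedom in placing each $y_j^F$ cannot accidentally break a cycle that a monochromatic $F$ should create, and conversely that a properly split $F$ always admits a consistent placement of all its $y_j^F$ keeping both parts acyclic. Verifying these two ``rigidity'' properties against every possible coloring of the auxiliary vertices is the delicate step; once the gadget is pinned down, the counting and the reduction's correctness should follow by the case analysis sketched above.
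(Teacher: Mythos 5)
There is a genuine gap, and it sits exactly at the spot you flagged as delicate: your gadget fails in the reverse direction. In your alternating cycle $x_{u_1}\to y_1^F\to x_{u_2}\to\cdots\to y_k^F\to x_{u_1}$ with the extra arcs $y_j^F\to x_{u_j}$ completing digons, every valid partition forces $y_j^F$ into the part \emph{opposite} to $x_{u_j}$ (a digon inside one part is itself a directed $2$-cycle). But then the long cycle is severed no matter how the element vertices are coloured: if all $x_{u_j}$ of a set $F$ lie in $V_1$, placing every $y_j^F$ in $V_2$ leaves $D[V_1]$ with only isolated $x$-vertices (every gadget arc is incident to some $y$) and $D[V_2]$ with only isolated $y$-vertices (no arc joins two $y$'s), so both parts are acyclic even though $F$ is monochromatic. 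In fact, with this gadget \emph{every} instance of \adp you produce is a YES instance: colour the $x_u$ arbitrarily and put each $y_j^F$ opposite its digon partner; the only arcs that can survive inside a part are of the form $y_j^F\to x_{u_{j+1}}$, and since no arc \emph{leaves} an $x$-vertex within a part, no directed cycle can close. The variant you hint at, giving $y_j^F$ digons with both cycle-neighbours, does not rescue this: it costs $4k$ arcs per set, exceeding your budget, and worse, it forces $x_{u_j}$ and $x_{u_{j+1}}$ onto the same side, i.e., it forces $F$ to be monochromatic --- the opposite of splitting.

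The repair, which is the paper's construction, is to keep the element vertices \emph{off} the cycle. For each set $F$ and each $u\in F$ create an occurrence vertex $c_u^F$; join the $c_u^F$ for a fixed $F$ into a directed cycle ($k$ fresh vertices, $k$ arcs), and attach each $c_u^F$ to the element vertex $d_u$ by a digon ($2k$ arcs), for exactly $3k$ arcs per set. The digons force $c_u^F$ opposite to $d_u$, so the colouring of the cycle is the elementwise complement of the colouring of the elements; hence the set-cycle lies in one part iff all of $F$ lies on one side, and any valid partition must split every set. Conversely, if $X$ splits $\cF$, put $d_u\in V_1$ iff $u\in X$ and each $c_u^F$ opposite: every set-cycle is broken, and each $d_u$ is isolated within its part, so both induced digraphs are unions of isolated vertices and directed paths. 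The structural moral, compared with your attempt: the cycle must pass only through vertices whose side is \emph{determined} by the elements (via digons), rather than alternate between element vertices and free auxiliaries, because in the latter case the forced placement of the auxiliaries always cuts the cycle exactly when you need it intact.
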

\begin{proof}
We construct the directed graph $D=(V,A)$ as follows.
For every set $F\in \cF$ and every $u\in F$ we build a vertex $c_u^F$
and connect all the vertices corresponding to the same set $F$
into a directed cycle in any order.
For every element $u\in U$ we build a vertex $d_u$ and for every vertex of the form $c_u^F$
we introduce two arcs: $(d_u,c_u^F)$ and $(c_u^F,d_u)$.
This concludes the construction; it is easy to verify the claimed sizes of $V$ and $A$.

Let us formally prove that the instances are equivalent.
Let $X$ be any solution to the $(\cF,U)$ instance of \setsplitting.
Let $V_1=\{d_u:u\in X\}\cup \{c_u^F:u\in U\setminus X\}$
and $V_2=\{d_u:u\in U\setminus X\}\cup \{c_u^F:u\in X\}$.
As $X$ splits every set $F\in \cF$, none of the cycles formed by vertices $c_u^F$
for fixed $F$ is entirely contained in either $V_1$ or $V_2$.
Also, for every element $u$ the vertex $d_u$ becomes isolated
in the corresponding graph $D[V_i]$,
as all his neighbours belong to $V_{3-i}$.
Therefore, both $D[V_1]$ and $D[V_2]$ are collections of isolated vertices and directed paths
and $(V_1,V_2)$ is a solution to the \adp instance.

In the other direction, let $(V_1,V_2)$ be a solution to the instance of \adp.
Let $X=\{u:d_u\in V_1\}\subseteq U$,
we claim that $X$ is a solution to the instance of \setsplitting.
Take any $F\in \cF$.
As the cycle formed by vertices $c_u^F$ is not entirely contained
in any of the graphs $D[V_1], D[V_2]$, there exist
some $u_1$ such that $c_{u_1}^F\in V_1$ and $u_2$ such that $c_{u_2}^F\in V_2$.
As the cycles formed by pairs $\{d_{u_1},c_{u_1}^F\}$ and $\{d_{u_2},c_{u_2}^F\}$
are also not entirely contained in $D[V_1]$ nor in $D[V_2]$,
$d_{u_1}\in V_2$ and $d_{u_2}\in V_1$.
Consequently, $u_1\in U\setminus X$ and $u_2\in X$ and the set $F$ is split. 
\end{proof}

\begin{lemma}\label{lem:fromadp}
There exists a polynomial-time algorithm that given an instance $D=(V,A)$ of \adp
outputs an equivalent instance $H=(V',E^+,E^-)$ of \lce,
such that $|V'|=|V|+|A|+1$, $|E^+|=2|A|$ and $|E^-|=|A|+|V|$.
\end{lemma}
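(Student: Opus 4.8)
The plan is to design a pivot-based gadget in which a single special vertex $r$ (the ``$+1$'' vertex) splits any feasible ordering of $H$ into a left part and a right part that correspond exactly to the bipartition $V_1,V_2$ sought in the \adp instance. Concretely, for every $v\in V$ I introduce a vertex $x_v$, for every arc $e\in A$ an arc-vertex $a_e$, and the pivot $r$. For an arc $e=(v,w)$ I add positive edges $a_ex_w$ and $a_er$ and a negative edge $a_ex_v$; additionally I add a negative edge $rx_v$ for every $v\in V$. This yields exactly $|V'|=|V|+|A|+1$ vertices, $|E^+|=2|A|$ and $|E^-|=|A|+|V|$, as required. The intended correspondence is that the membership of $v$ in $V_i$ is read off from the side of $r$ on which $x_v$ lies, the head-positive/tail-negative asymmetry of the arc gadget encodes the orientation of $e$, and the edges $rx_v$ together with $a_er$ pin the relative positions of the arc-vertices.

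For the direction \adp${}\Rightarrow{}$\lce I would start from a partition $V=V_1\cup V_2$ with $D[V_1],D[V_2]$ acyclic, fix topological orderings $\sigma_1,\sigma_2$ of these two DAGs, and build $\pi$ as a concatenation of four blocks: the vertices $x_v$ with $v\in V_1$ in the order $\sigma_1$, then the arc-vertices whose head lies in $V_1$, then $r$, then the arc-vertices whose head lies in $V_2$, and finally the vertices $x_v$ with $v\in V_2$ in the order reverse to $\sigma_2$. Within each arc-vertex block I would sort $a_e$ by the $\sigma_i$-rank of its head. I would then verify the condition imposed on every vertex: for $r$ it is immediate, since its positive neighbours (the $a_e$) all sit closer to $r$ than its negative neighbours (the $x_v$); for each $a_e$ it reduces to the fact that in a topological order the tail precedes the head (and is vacuous for a cut arc, whose tail then lies on the far side of $r$); and for each $x_v$ it reduces to the chosen sorting, which guarantees that the in-arcs of $v$ (positive neighbours) precede its out-arcs (negative neighbours).

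For the converse \lce${}\Rightarrow{}$\adp I would set $V_1=\{v : x_v<_\pi r\}$ and $V_2=\{v : x_v>_\pi r\}$ and argue structurally. First, since $r$ is a negative neighbour of $x_w$, condition (ii) (or (i)) imposed on $x_w$ forces every positive neighbour of $x_w$ --- in particular $a_e$ when $w$ is the head of $e$ --- to lie on the same side of $r$ as $x_w$. Second, the condition imposed on $r$ itself forces, on each side, all arc-vertices (its positive neighbours) to lie closer to $r$ than all the $x_v$ (its negative neighbours). Combining these, for any arc $e=(v,w)$ with both endpoints in $V_1$ the vertices $x_v,x_w$ lie outside $a_e$ on the left, and the condition imposed on $a_e$ (head $x_w$ positive and closer than tail $x_v$ negative) yields $x_v<_\pi x_w$. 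Hence the left-to-right restriction of $\pi$ is a topological order of $D[V_1]$, so $D[V_1]$ is acyclic; a symmetric argument with the reversed order handles $D[V_2]$.

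I expect the main obstacle to be the forward direction: producing a single linear arrangement that simultaneously satisfies the local condition at every vertex. The delicate point is the ordering of the arc-vertices inside each block, which must be compatible at once with the head-before-tail requirement at every $a_e$ and with the in-arcs-before-out-arcs requirement at every $x_v$; sorting the arc-vertices by the topological rank of their heads is exactly what makes both hold at the same time. A secondary point to check is that the gadget does not accidentally force the two endpoints of an arc onto the same side of $r$ (which would make cut arcs impossible and break the reduction): this is ensured by making $a_e$ adjacent positively only to the head and negatively only to the tail, so that a cut arc imposes no constraint relating $x_v$ and $x_w$.
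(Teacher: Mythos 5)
Your gadget is the paper's construction up to reversing every arc (the paper attaches the positive edge of the checker vertex to the \emph{tail} and the negative edge to the \emph{head}; since reversing all arcs preserves \adp, this difference is immaterial), and your converse direction (\lce $\Rightarrow$ \adp) matches the paper's argument step for step. The genuine flaw is in the forward direction, exactly at the point you yourself flagged as delicate. Sorting \emph{both} arc-vertex blocks by the $\sigma_i$-rank of the head in the same (increasing) direction is wrong on the $V_2$ side, and your claimed invariant that ``the in-arcs of $v$ precede its out-arcs'' is the wrong requirement there: for $v\in V_2$ the vertex $x_v$ lies to the \emph{right} of all its neighbours, so condition \emph{(i)} at $x_v$ demands the opposite, namely that every negative neighbour (out-arc vertex of $v$) precede every positive neighbour (in-arc vertex of $v$) in $\pi$. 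Concretely, if $D[V_2]$ contains a path $u\to v\to w$ (so the instance is a trivial yes-instance), then with an increasing-head-rank sort the out-arc vertex $a_{(v,w)}$ has head rank $\sigma_2(w)>\sigma_2(v)$ and is placed after the in-arc vertex $a_{(u,v)}$, giving $a_{(u,v)}<_\pi a_{(v,w)}<_\pi x_v$ with $a_{(u,v)}x_v\in E^+$ and $a_{(v,w)}x_v\in E^-$ --- a violation of condition \emph{(i)} at $x_v$. So the ordering you build is infeasible on very simple positive instances, and the verification claim for $x_v$ with $v\in V_2$ is false as stated.

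The fix is local and simply restores the mirror symmetry you already built into the $x$-blocks (forward $\sigma_1$ on the left, reversed $\sigma_2$ on the right): sort the right arc-vertex block by \emph{decreasing} $\sigma_2$-rank of the head. One then checks, symmetrically to your $V_1$ analysis, that for every $v\in V_2$ all out-arc vertices and $r$ lie to the left of all in-arc vertices of $v$ (ties at rank $\sigma_2(v)$ are harmless, since arcs sharing a head are all positive neighbours of the same $x$-vertex), while the conditions at $r$ and at the arc-vertices $a_e$ do not depend on the within-block order at all. This mirroring is exactly how the paper handles the same issue: it places the internal checkers of the $V_1$ side in \emph{reverse} lexicographic order and those of the $V_2$ side in (forward) lexicographic order with respect to the topological orderings. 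With this one-line correction your proof is complete and is essentially the paper's proof.
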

\begin{proof}
We construct the graph $H$ as follows: The set of vertices, $V'$, consists of:
\begin{itemize}
\item a {\em special} vertex $s$;
\item for every $e\in A$, a {\em checker} vertex $c_e$;
\item for every $v\in V$, an {\em alignment} vertex $a_v$.
\end{itemize}
We construct the edges of the signed graph as follows:
\begin{itemize}
\item for every $e\in A$, we introduce a positive edge $sc_e$;
\item for every $v\in V$, we introduce a negative edge $sa_v$;
\item for every arc $(v,w)\in A$, we introduce a positive edge $c_{(v,w)}a_v$ and a negative edge $c_{(v,w)}a_w$.
\end{itemize}
This concludes the construction; it is easy to verify the claimed sizes of $V',E^+,E^-$.

Let us now formally prove equivalence of the instances.
Let $\pi$, an ordering of $V'$, be a solution of the \lce instance $(V',E^+,E^-)$.
As the special vertex $s$ is adjacent via positive edges to all the checker vertices,
and via negative edges to all the other, alignment, vertices,
in the ordering $\pi$ the checker vertices
together with the special vertex have to form an interval.
Let $V_1$ be the set of those $v\in V$ for which $a_v$ is to the left of this interval,
whereas $V_2$ is the set of those $v\in V$ for which $a_v$ is to the right of this interval.
Formally, $V_1=\{v\in V:a_v\leq_\pi s\}$ and $V_2=\{v\in V:a_v\geq_\pi s\}$.
We claim that $(V_1,V_2)$ is a feasible solution of the \adp instance $(V,A)$.
Consider any arc $(v,w)$ such that $v,w\in V_1$.
As $a_v\leq_\pi c_{(v,w)}$, $a_w\leq_\pi c_{(v,w)}$, $c_{(v,w)}a_v\in E^+$ and $c_{(v,w)}a_w\in E^-$, then it follows that $a_w\leq_\pi a_v$.
Thus, $\pi$ has to induce a reverse topological ordering on the vertices of $D[V_1]$ and, therefore, $D[V_1]$ has to be acyclic.
Symmetrically, $D[V_2]$ has to be acyclic as well, which concludes the proof of $(V_1,V_2)$ being a feasible solution.

Now take any solution $(V_1,V_2)$ of \adp instance $(V,A)$.
Let $\pi_1$ be any topological ordering of $D[V_1]$ and $\pi_2$ be any topological ordering of $D[V_2]$,
by which we mean that if $(u,v)$ is an arc of $D[V_1]$, $\pi_1(u) < \pi_1(v)$,
and the same holds for $\pi_2$.
Let us construct an ordering $\pi$ of $V'$ as follows:
\begin{itemize}
\item first, we place all the vertices $a_v$ for $v\in V_1$ in the reverse order induced by $\pi_1$;
\item then, we place all the checker vertices $c_{(v,w)}$ for which $v\in V_1$ and $w\in V_2$, in any order;
\item then, we place all the checker vertices $c_{(v,w)}$ for which $v,w\in V_1$,
in reverse lexicographic order imposed by $\pi_1$ on pairs $(v,w)$;
\item then, we place the special vertex $s$;
\item then, we place all the checker vertices $c_{(v,w)}$ for which $v,w\in V_2$, in lexicographic order imposed by $\pi_2$ on pairs $(v,w)$;
\item then, we place all the checker vertices $c_{(v,w)}$ for which $v\in V_2$ and $w\in V_1$, in any order;
\item finally, we place all the vertices $a_v$ for $v\in V_2$ in the order induced by $\pi_2$.
\end{itemize}
We claim that such $\pi$ is a feasible solution to \lce instance $(V',E^+,E^-)$.

Note that the positive neighbours of the special vertex $s$ form an interval,
therefore the condition imposed on this vertex is satisfied.
Now consider a checker vertex $c_{(v,w)}$.
If $v,w$ belong to different sets $V_1,V_2$, then the only negative neighbour of $c_{(v,w)}$
is the first or the last of his closed neighbourhood with respect to $\pi$,
thus satisfying the condition imposed on $c_{(v,w)}$.
In case when $v,w\in V_1$ or $v,w\in V_2$ this is also true,
due to $\pi_1,\pi_2$ being topological orderings of $D[V_1]$, $D[V_2]$ respectively.

Now take any vertex $a_v$, by symmetry assume $v\in V_1$.
We need to prove that the condition imposed on $a_v$ is satisfied as well. The neighbours of $v$ consist of:
\begin{enumerate}
\item positive neighbours $c_{(v,v')}$, such that $v'\in V_2$;\label{vv:1}
\item positive neighbours $c_{(v,v')}$, such that $v'\in V_1$;\label{vv:2}
\item negative neighbours $c_{(v',v)}$, such that $v'\in V_1$;\label{vv:3}
\item negative neighbours $c_{(v',v)}$, such that $v'\in V_2$.\label{vv:4}
\end{enumerate}
We now verify that by the construction of $\pi$ the neighbours of $a_v$ lie in this very order with respect to $\pi$.
Clearly, the order in which we placed the checkers in $\pi$ ensures that the neighbours from (\ref{vv:1}) are placed before the neighbours from (\ref{vv:2}) and
that the neighbours from (\ref{vv:3}) are placed before the neighbours from (\ref{vv:4}). 
Thus the only non-trivial check is whether the vertices from (\ref{vv:2}) lie before the vertices from (\ref{vv:3}).
Assume otherwise, that there are some $v_1',v_2'$ such that $(v,v_1')\in A$, $(v_2',v)\in A$, but $c_{(v,v_1')}>_{\pi}c_{(v_2',v)}$. But then $v_2'<_{\pi_1}v$ as $\pi_1$ is a topological ordering of $D[V_1]$, so the pair $(v_2',v)$ is lexicographically smaller than the pair $(v,v_1')$. Thus $c_{(v,v_1')}>_{\pi}c_{(v_2',v)}$ is a contradiction with the construction of $\pi$.

We have verified that for all the vertices the conditions imposed on them are satisfied, so the instances are equivalent. 
\end{proof}

The $NP$-completeness of the \setsplitting problem \cite{garey-johnson}, together with Lemmata \ref{lem:toadp}, \ref{lem:fromadp} and a trivial observation that \lce is in $NP$, gives us the following theorem.

\begin{theorem}
The \lce problem is $NP$-complete.
\end{theorem}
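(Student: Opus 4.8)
The plan is to assemble the theorem from the three ingredients already in place: membership of \lce in NP, the NP-completeness of \setsplitting, and the two reductions of Lemmata~\ref{lem:toadp} and~\ref{lem:fromadp}.

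First I would argue that \lce lies in NP. A natural certificate is an ordering $\pi$ of $V$; given such a $\pi$, one can verify conditions \emph{(i)} and \emph{(ii)} for every vertex $u$ by a brute-force scan over triples $(u_1,u_2,u)$, which runs in polynomial time (in fact $O(|V|^3)$ clearly suffices, and this bound can be sharpened). Thus a feasible ordering is a polynomially checkable witness, which places \lce in NP.

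Next I would establish NP-hardness by composing the two reductions into a single polynomial-time many-one reduction from \setsplitting to \lce. Starting from an instance $(\cF,U)$ of \setsplitting, Lemma~\ref{lem:toadp} produces in polynomial time an equivalent instance $D=(V,A)$ of \adp, and then Lemma~\ref{lem:fromadp} produces in polynomial time an equivalent instance $H=(V',E^+,E^-)$ of \lce. Since each of these transformations runs in polynomial time and preserves the answer (yes-instances map to yes-instances and no-instances to no-instances), their composition is again a polynomial-time reduction, and $(\cF,U)$ is a yes-instance of \setsplitting if and only if $H$ is a yes-instance of \lce. Because \setsplitting is NP-complete~\cite{garey-johnson}, it follows that \lce is NP-hard, and combining this with membership in NP yields NP-completeness.

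I do not expect any genuine obstacle at this stage: all the technical content is carried by the two lemmas, whose correctness proofs are the real work. The only points that require any care are the polynomial time bound on verifying a candidate ordering (to certify membership in NP) and the routine observation that the composition of two answer-preserving polynomial-time reductions is itself an answer-preserving polynomial-time reduction. Both are straightforward once the lemmas are available.
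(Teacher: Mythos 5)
Your proposal is correct and follows exactly the paper's route: the paper also obtains this theorem by combining the trivial NP-membership of \lce with the NP-completeness of \setsplitting~\cite{garey-johnson} and the composition of the reductions in Lemmata~\ref{lem:toadp} and~\ref{lem:fromadp}. The paper states this in one sentence without proof, so your write-up is simply a more explicit rendering of the same argument.
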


As mentioned before, the question of finding the smallest dimension of the Euclidean space, into which the given graph can be embedded, clearly generalizes testing embeddability into a line. Therefore, we have the following corollary.

\begin{corollary}
It is $NP$-hard to decide the smallest dimension of the Euclidean space, into which a given signed graph can be embedded.
\end{corollary}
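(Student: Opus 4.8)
The plan is to obtain the corollary as an immediate consequence of the preceding theorem, which asserts that \lce is $NP$-complete, by viewing \lce as the restriction of the dimension-minimisation problem to the threshold dimension~$1$. Fixing the convention that a dimension is a positive integer, associate with every signed graph $G=(V,E^+,E^-)$ the quantity $\dim(G)=\min\{\,l\geq 1: G \text{ admits a feasible embedding in } \R^{l}\,\}$. By the definition of a feasible embedding recalled in Section~\ref{sec:prel}, $G$ admits a feasible embedding in $\R^{1}$ exactly when $G$ is a yes-instance of \lce; hence $\dim(G)=1$ if $G$ is a yes-instance and $\dim(G)\geq 2$ otherwise.

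The reduction is then a one-line Turing reduction. I would argue that a polynomial-time algorithm $\mathcal{A}$ computing $\dim(G)$ yields a polynomial-time algorithm for \lce: on input $G$, run $\mathcal{A}(G)$ and return \emph{yes} precisely when the returned value equals~$1$. Correctness is exactly the equivalence of the previous paragraph, so such an $\mathcal{A}$ would place the $NP$-complete problem \lce in $P$; therefore computing $\dim(G)$ is $NP$-hard. Equivalently, and even more directly, the auxiliary decision problem ``given a signed graph $G$ and an integer $k$, does $G$ embed in $\R^{k}$?'' contains \lce verbatim as its $k=1$ slice and is thus $NP$-hard as a special case.

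The only point that genuinely requires care --- and the step I would treat as the main obstacle --- is the \emph{well-definedness} of $\dim(G)$: for the minimisation problem to make sense one wants every signed graph to be embeddable in $\R^{l}$ for some finite~$l$, so that $\mathcal{A}$ returns a genuine integer. I note first that the reduction above is robust to this issue, since whatever the oracle reports for a non-embeddable (or high-dimensional) $G$ it is not the value~$1$, matching the fact that a yes-instance necessarily embeds already in $\R^{1}$. To actually guarantee finiteness I would exhibit a high-dimensional embedding: it suffices to place the vertices so that every positive edge is strictly shorter than every negative edge, whence the per-vertex conditions hold automatically, and one checks that such a configuration always exists --- the triangle inequality only forces a negative edge $uv$ below $t\cdot(\max \text{positive length})$, where $t\geq 2$ is the length of a shortest positive path from $u$ to $v$, leaving room to stretch it past the positive edges --- giving embeddability in dimension at most $|V|$. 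This existence fact is in the spirit of the constructions of Kermarrec and Thraves~\cite{francuzi}.
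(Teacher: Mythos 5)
Your proposal is correct and follows essentially the same route as the paper, which likewise derives the corollary immediately from the $NP$-completeness of \lce by observing that the dimension-minimisation problem contains embeddability into $\R^1$ as its threshold-$1$ case. Your additional discussion of the well-definedness of $\dim(G)$ via a high-dimensional embedding goes beyond what the paper records (it treats the generalisation as immediate), but as you yourself note the reduction does not depend on it, since a yes-instance of \lce is characterised by the oracle answer being exactly $1$.
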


\subsection{Lower bound on the complexity}

In this subsection we observe that the presented chain of reductions in fact enables us also to establish a lower bound on the complexity of solving \lce under ETH. Firstly, let us complete the chain of the reductions.

\begin{lemma}\label{lem:ss}
There exists a polynomial-time algorithm that given an instance $\ff$ of \tcnfsat with $n$ variables and $m$ clauses, outputs an equivalent instance $(U,\cF)$ of \setsplitting with $|U|=2n+1$ and $\sum_{F\in \cF} |F|=2n+4m$.
\end{lemma}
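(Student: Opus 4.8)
The plan is to construct a \setsplitting instance that simulates the standard reduction from \tcnfsat to \setsplitting, while carefully controlling the two size parameters $|U|=2n+1$ and $\sum_{F\in\cF}|F|=2n+4m$. The natural idea is to represent each variable $x_i$ by a pair of universe elements, one for each literal $x_i$ and $\neg x_i$, together with one extra ``anchor'' element; the factor $2n+1$ strongly suggests exactly this: $2n$ literal-elements plus one distinguished element. Let me denote the anchor by $z$, and for each variable introduce elements $p_i$ (for $x_i$) and $q_i$ (for $\neg x_i$). The intended correspondence is that a splitting set $X\subseteq U$ encodes a truth assignment, where $z\in X$ fixes a reference side, and $p_i$ lands on the opposite side from $q_i$ to force a consistent assignment to each variable.

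First I would enforce variable-consistency. For each variable $x_i$ I add a set $\{p_i,q_i,z\}$ to $\cF$; splitting this set forces $p_i,q_i,z$ not all on the same side, but to really guarantee that $p_i$ and $q_i$ are separated (so that each variable gets a well-defined value), I would instead pair the two literals with the anchor in a way that, combined with the anchor's fixed side, forces $p_i$ and $q_i$ to opposite sides. A clean way is to use the size budget: $n$ consistency sets of size $3$ (using $p_i,q_i,z$) contribute $3n$, but the target $\sum|F|=2n+4m$ forces a leaner design, so I would use consistency sets of size $2$, namely $\{p_i,q_i\}$, contributing $2n$. Splitting $\{p_i,q_i\}$ exactly forces $p_i$ and $q_i$ to opposite sides, which is precisely a Boolean assignment: say $x_i$ is \emph{true} iff $p_i$ is on the same side as $z$.

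Next I would handle the clauses. For each clause $C=(\ell_1\vee\ell_2\vee\ell_3)$ I add the set $F_C=\{e(\ell_1),e(\ell_2),e(\ell_3),z\}$, where $e(\ell)$ is $p_i$ or $q_i$ according to the literal; this set has size $4$, and with $m$ clauses contributes $4m$, matching the remaining budget. The key equivalence I must verify is that a splitting $X$ (with $z\in X$, say) makes every clause set $F_C$ split if and only if the induced assignment satisfies $\ff$. Since $z\in X$, splitting $F_C$ means at least one literal-element of $F_C$ lies in $U\setminus X$; I would translate ``$e(\ell)\in U\setminus X$'' into ``$\ell$ is true'' under my convention and check both directions, taking care that the $\{p_i,q_i\}$ sets guarantee each variable is consistently assigned so the translation is unambiguous. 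I expect the main obstacle to be bookkeeping the side-conventions so that the anchor $z$ simultaneously serves to define ``true'' for the consistency sets and to guarantee each clause set has an element on $z$'s side (satisfied automatically by $z$ itself) while needing an element on the other side (the satisfied-literal witness); a secondary subtlety is confirming that the reduction is genuinely symmetric under swapping $X$ with $U\setminus X$, so that WLOG placing $z\in X$ loses no generality. Once the conventions are fixed, verifying both directions of the equivalence and tallying $|U|=2n+1$ and $\sum_{F\in\cF}|F|=2n+4m$ is routine.
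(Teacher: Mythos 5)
Your construction is precisely the paper's: your anchor $z$ is the paper's special element $s$, your sets $\{p_i,q_i\}$ are its variable sets $F_x=\{x,\neg x\}$, your size-$4$ clause sets containing the anchor are its $F_C$, and your observation that complementing $X$ preserves solutions is exactly how the paper normalizes the anchor to one side ($s\notin X$). The only bookkeeping to fix is the convention you stated, which is internally inconsistent with your translation: it should read ``$\ell$ is true iff $e(\ell)$ lies on the side \emph{opposite} the anchor'' (i.e., $X$ is the set of true literals with $z\notin X$), after which your reading ``$e(\ell)\in X$ witnesses the split of $F_C$ iff $\ell$ is satisfied'' goes through verbatim as in the paper.
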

\begin{proof}
We construct the instance $(U,\cF)$ as follows. The universe $U$ consists of one special element $s$ and two literals $x,\neg x$ for every variable $x$ of $\ff$. The family $\cF$ includes
\begin{itemize}
\item for every variable $x$, a set $F_x=\{x,\neg x\}$;
\item for every clause $C$, a set $F_C$ consisting of $s$ and all the literals in $C$.
\end{itemize}
It is easy to check the claimed sizes of $U,\cF$. We claim that the instance of \setsplitting $(U,\cF)$ is equivalent to the instance $\ff$ of \tcnfsat.

Assume that $\psi$ is a boolean evaluation of variables of $\ff$ that satisfies $\ff$. We construct a set $X\subseteq U$ as follows: $X$ consists of all the literals that are true in $\psi$. Now, every set $F_x$ is split, as exactly one of the literals is true and one is false, whereas every set $F_C$ is split as well, as it contains a true literal, which belongs to $X$, and the special element $s$, which does not.

Now assume that $X\subseteq U$ is a solution to the \setsplitting instance $(U,\cF)$. As taking $U\setminus X$ instead of $X$ also yields a solution, without losing generality we can assume that $s\notin X$. Every set $F_x$ is split by $X$; therefore, exactly one literal of every variable belongs to $X$ and exactly one does not. Let $\psi$ be a boolean evaluation of variables of $\ff$ such that it satisfies all the literals belonging to $X$. Observe that $\psi$ satisfies $\ff$: for every clause $C$ the set $F_C$ has to be split, so, as $s\notin X$, one of the literals of $C$ belongs to $X$ and, thus, is satisfied by $\psi$. 
\end{proof}

Note that by pipelining Lemmata \ref{lem:ss}, \ref{lem:toadp} and \ref{lem:fromadp}, we obtain a reduction from \tcnfsat to \lce, where the output instance has a number
of vertices and edges bounded linearly in the number of variables and clauses of the input formula.
This observation, together with the key tool used in proving complexity lower bounds under Exponential Time Hypothesis,
     namely the Sparsification Lemma \cite{sparsification}, gives us the following theorem.

\begin{theorem}
Unless ETH fails, there is a constant $\delta>0$ such that there is no algorithm that given a $(V,E^+,E^-)$ instance of \lce problem, solves it in $O(2^{\delta(|V|+|E^+|+|E^-|)})$ time.
\end{theorem}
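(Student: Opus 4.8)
The plan is to argue by contraposition: I will show that a sufficiently fast algorithm for \lce, combined with the chain of reductions already built, would yield an algorithm for \tcnfsat that is too fast to be consistent with ETH. The first step is to record that pipelining Lemma~\ref{lem:ss}, Lemma~\ref{lem:toadp} and Lemma~\ref{lem:fromadp} gives, in polynomial time, a reduction from an $n$-variable, $m$-clause instance $\ff$ of \tcnfsat to an equivalent instance $(V',E^+,E^-)$ of \lce. Substituting the size bounds of the three lemmas into one another, the quantities $|U|$, $\sum_{F\in\cF}|F|$, then $|V|,|A|$, and finally $|V'|,|E^+|,|E^-|$ are each bounded linearly in $n+m$; in particular there is an absolute constant $K$ with $|V'|+|E^+|+|E^-|\le K(n+m)$.

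The second step is to upgrade the basic ETH, which speaks only about the number of variables $n$, to a statement that also controls the number of clauses $m$. This is exactly the role of the Sparsification Lemma~\cite{sparsification}: for every $\eps>0$ it lets one rewrite any $n$-variable \tcnfsat formula as a disjunction of at most $2^{\eps n}$ formulas, each on the same $n$ variables but with only $O(n)$ clauses, the whole decomposition being computable in time $2^{\eps n}\cdot\mathrm{poly}(n)$. A standard consequence (fix a small $\eps$, which fixes the per-instance clause bound, and then choose the base of the exponent small enough) is that, unless ETH fails, there is a constant $s>0$ such that \tcnfsat admits no algorithm running in time $O(2^{s(n+m)})$.

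The final step is to combine the two ingredients. I fix $\delta$ to be any positive constant smaller than $s/K$, say $\delta=s/(2K)$. Suppose for contradiction that some algorithm solves every \lce instance in time $O(2^{\delta(|V|+|E^+|+|E^-|)})$. Given a \tcnfsat formula $\ff$, I run the polynomial-time pipeline to produce the equivalent \lce instance and then apply this algorithm; since the yes/no answer is preserved along the reductions, this decides $\ff$. Its total running time is polynomial plus $O(2^{\delta(|V'|+|E^+|+|E^-|)}) = O(2^{\delta K(n+m)}) = O(2^{(s/2)(n+m)})$, which is faster than the excluded bound $O(2^{s(n+m)})$ and hence contradicts ETH by the previous paragraph. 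Therefore, for this particular $\delta$ no such \lce algorithm can exist.

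I expect the only genuinely delicate point to be the invocation of the Sparsification Lemma: the ETH assumption as stated bounds only the exponent in $n$, so without sparsifying one could not rule out an algorithm whose running time is measured against the total size $|V|+|E^+|+|E^-|$, which inevitably also grows with the number of edges, hence with the number of clauses of $\ff$. Once the clause count is made linear in $n$, the bookkeeping of the constants through the three reductions is entirely routine.
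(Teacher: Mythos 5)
Your proposal is correct and follows essentially the same route as the paper: both rest on the chain of Lemmata~\ref{lem:ss}, \ref{lem:toadp} and \ref{lem:fromadp} producing an \lce instance of size linear in $n+m$, combined with the Sparsification Lemma to control the clause count. The only difference is packaging --- you first distil the standard corollary that ETH forbids an $O(2^{s(n+m)})$ algorithm for \tcnfsat and then reduce once, whereas the paper sparsifies inline and reduces each sparse formula separately --- which is a logically equivalent reorganisation.
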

\begin{proof}
Let us begin by recalling the {\em Sparsification Lemma}.

\begin{lemma}[Sparsification Lemma, Corollary $1$ of \cite{sparsification}]
For all $\varepsilon>0$ and positive $k$, there is a constant $C$ so that any
$k${\sc -SAT} formula $\Phi$ with $n$ variables can be expressed as $\Phi=\bigvee_{i=1}^{t} \Psi_i$, where $t\leq 2^{\varepsilon n}$
and each $\Psi_i$ is a $k${\sc -SAT} formula with at most $Cn$ clauses. Moreover, this disjunction
can be computed by an algorithm running in time $\Ohstar(2^{\varepsilon n})$.
\end{lemma}

Let us now assume that for all $\delta>0$ there exists an algorithm solving \lce in $O(2^{\delta(|V|+|E^+|+|E^-|)})$ time complexity. We now show an algorithm solving \tcnfsat in $\Ohstar(2^{\varepsilon n})$ time for every $\varepsilon>0$, where $n$ is the number of variables, thus contradicting the ETH. Indeed, having fixed $\varepsilon$ we can:
\begin{itemize}
\item take an instance of \tcnfsat and using Sparsification Lemma in $\Ohstar(2^{\varepsilon n/2 })$ time express it as a disjunction of at most $2^{\varepsilon n/2}$ \tcnfsat instances, each containing at most $Cn$ clauses for some constant $C$;
\item reduce each instance in polynomial time via \setsplitting and \adp to \lce, thus obtaining at most $2^{\varepsilon n/2}$ instances of \lce, each having $|V|,|E^+|,|E^-|\leq C'n$ for some constant $C'$;
\item in each of the instances run the assumed algorithm for \lce, running in $O(2^{\delta(|V|+|E^+|+|E^-|)})$ time, for $\delta=\frac{\varepsilon}{6C'}$.
\end{itemize}

\end{proof}

\subsection{A single-exponential algorithm for \lce}

Note that the trivial brute-force algorithm for \lce checks all possible orderings, working in $\Ohstar(n!)$ time.
To complete the picture of the complexity of \lce, we show that a simple dynamic programming approach can give single-exponential time complexity.
This matches the lower bound obtained from under Exponential Time Hypothesis (up to a constant
in the base of the exponent).

Before we proceed with the description of the algorithm, let us state a combinatorial observation that will be its main ingredient. Let $(V,E^+,E^-)$ be the given \lce instance. For $X\subseteq V$ and $v\notin X$ we will say that $v$ is {\em good} for the set $X$ iff
\begin{itemize}
\item no vertex $w\in X$ that is adjacent to $v$ via a negative edge is simultaneously adjacent to some vertex from $V\setminus (X\cup\{v\})$ via a positive edge;
\item no vertex $w\in V\setminus (X\cup\{v\})$ that is adjacent to $v$ via a negative edge is simultaneously adjacent to some vertex from $X$ via a positive edge.
\end{itemize}

\begin{lemma}\label{lem:good}
An ordering $\pi$ is a feasible solution of $(V,E^+,E^-)$ if and only if every vertex $v\in V$ is good for the set $\{u:u<_{\pi}v\}$.
\end{lemma}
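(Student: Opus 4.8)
The plan is to prove the equivalence by a direct, two-directional argument that matches violations of the goodness condition against violations of the feasibility conditions (i) and (ii). For a fixed ordering $\pi$ and a vertex $v$, write $L_v = \{u : u <_\pi v\}$ and $R_v = \{u : u >_\pi v\}$, so that when $X$ is the down-set $\{u : u <_\pi v\}$ we have $X = L_v$ and $V \setminus (X \cup \{v\}) = R_v$. With this notation the two clauses defining goodness of $v$ read: (a) no $w \in L_v$ with $wv \in E^-$ has a positive edge to some $z \in R_v$; (b) no $w \in R_v$ with $wv \in E^-$ has a positive edge to some $z \in L_v$.

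The key observation, which drives both directions, is a single re-labeling. Suppose clause (a) fails for $v$, witnessed by $w \in L_v$ and $z \in R_v$ with $wv \in E^-$ and $wz \in E^+$. Then $z >_\pi v >_\pi w$, so, reading this configuration centered at $w$, there is a positive neighbour $z$ and a negative neighbour $v$ both on the right of $w$ with the positive one farther away --- exactly a violation of condition (ii) for $w$. Symmetrically, a failure of clause (b) for $v$, witnessed by $w \in R_v$ and $z \in L_v$, gives $z <_\pi v <_\pi w$, a violation of condition (i) for $w$. In short, the offending vertex $w$ in the goodness condition plays the role of the centre vertex $u$ in the feasibility conditions, with $v$ being its negative neighbour $u_2$ and $z$ its positive neighbour $u_1$.

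For the forward direction (feasible $\Rightarrow$ every vertex good) I would argue by contraposition: if some $v$ is not good for its down-set, then clause (a) or (b) fails, and the observation above exhibits a vertex $w$ violating (ii) or (i), contradicting feasibility. For the converse (every vertex good $\Rightarrow$ feasible) I would again argue contrapositively. A violation of (i) for a centre vertex $u$, witnessed by $u_1 <_\pi u_2 <_\pi u$ with $u_1 u \in E^+$ and $u_2 u \in E^-$, is precisely a failure of clause (b) for $v := u_2$ (take $w = u \in R_{u_2}$ and $z = u_1 \in L_{u_2}$); a violation of (ii) for $u$ is precisely a failure of clause (a) for $v := u_2$. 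Hence every breach of feasibility produces a vertex that fails to be good, and conversely.

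Since the whole argument is a quantifier re-labeling, there is no genuine computational obstacle; the only point demanding care is the bookkeeping --- correctly checking which of $L_v$ or $R_v$ each witness lands in, and verifying the membership $u \in V \setminus (X \cup \{v\})$ or $u \in X$ relative to the down-set $X$ of $v$. I would therefore make the correspondence between witnesses fully explicit in both directions, so that each implication follows immediately from the single structural observation rather than being reproved from scratch.
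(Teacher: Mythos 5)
Your proposal is correct and takes essentially the same route as the paper: both directions are proved by the same witness re-labelling, with the middle vertex $u_2$ of a feasibility violation playing the role of the vertex $v$ whose goodness fails, and the offending vertex $w$ of a goodness violation playing the role of the centre vertex $u$ of conditions \emph{(i)}/\emph{(ii)}. The paper's proof is simply a terser rendering of the explicit correspondence you spell out.
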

\begin{proof}
One direction is obvious: if $\pi$ is a feasible solution, then every vertex $v$ has to be good for the set $\{u:u<_{\pi}v\}$.
If $v$ would not be good for $\{u:u<_{\pi}v\}$, there would exist a vertex $w$ certifying that $v$ is not good,
and the condition imposed upon $w$ would be not satisfied.

Now assume that every vertex $v\in V$ is good for $\{u:u<_{\pi}v\}$ and take an arbitrary vertex $v\in V$.
If there were vertices $u_1<_{\pi}u_2<_{\pi}v$ such that $u_1v\in E^+$ while $u_2v\in E^-$,
then $u_2$ would not be good for the set $\{u:u<_{\pi}u_2\}$, a contradiction.
Similarly, if there were vertices $u_1>_{\pi}u_2>_{\pi}v$ such that $u_1v\in E^+$ while $u_2v\in E^-$,
then $u_2$ would not be good for the set $\{u:u<_{\pi}u_2\}$, a contradiction as well.
Therefore, the condition imposed on $v$ is satisfied for an arbitrary choice of $v$. 
\end{proof}

We are now ready to provide the details of the algorithm.

\begin{theorem}
\lce can be solved in $\Ohstar(2^n)$ time and space complexity. Moreover, the algorithm can also output a feasible ordering of the vertices, if it exists.
\end{theorem}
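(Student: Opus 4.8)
**

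The plan is to build a dynamic programming algorithm over subsets of $V$, exploiting Lemma~\ref{lem:good} to characterize feasible orderings locally. The key idea is that a feasible ordering is constructed by placing vertices one at a time from left to right, and by Lemma~\ref{lem:good} the only requirement for extending a partial ordering is that the newly placed vertex be \emph{good} for the set of already-placed vertices. Crucially, the ``goodness'' of a vertex $v$ for a set $X$ depends only on the set $X$ itself (which determines both the placed part and the complementary part $V\setminus(X\cup\{v\})$), and not on the internal order in which the elements of $X$ were placed. This is precisely what makes a subset-indexed dynamic program possible.

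Concretely, I would define a boolean table $\mathrm{Reach}[X]$ for every $X\subseteq V$, with the intended meaning that $\mathrm{Reach}[X]$ is true if and only if the elements of $X$ can be arranged as a prefix $u_1<_\pi u_2<_\pi\cdots<_\pi u_{|X|}$ of a feasible ordering of the whole instance, i.e.\ there is an ordering of $X$ such that each $u_i$ is good for $\{u_1,\ldots,u_{i-1}\}$. The base case is $\mathrm{Reach}[\emptyset]=\mathrm{true}$. The transition is
\begin{equation*}
\mathrm{Reach}[X] = \bigvee_{v\in X}\Big(\mathrm{Reach}[X\setminus\{v\}]\ \wedge\ v\text{ is good for }X\setminus\{v\}\Big),
\end{equation*}
since $X$ is a feasible prefix exactly when its last element $v$ can be chosen so that $X\setminus\{v\}$ is itself a feasible prefix and $v$ is good for it. The answer to the \lce instance is then $\mathrm{Reach}[V]$; by Lemma~\ref{lem:good} an ordering is feasible iff every vertex is good for its set of predecessors, which is exactly what the table accumulates.

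For the running time and space, there are $2^n$ subsets, and for each subset we try all $O(n)$ candidates $v$ for the last position. The one point that needs care is evaluating the predicate ``$v$ is good for $X\setminus\{v\}$'' fast enough to stay within $\Ohstar(2^n)$; I would argue that checking the two conditions in the definition of goodness amounts to scanning, for the relevant negative neighbours of $v$, whether their positive neighbours fall on the correct side of the partition $(X,V\setminus X)$, which can be done in time polynomial in $n$ (for instance $O(|E^+|+|E^-|)$) per check. Since this polynomial factor is absorbed by the $\Ohstar$ notation, the total time and space are $\Ohstar(2^n)$. To recover an actual feasible ordering rather than just a yes/no answer, I would store for each $X$ with $\mathrm{Reach}[X]=\mathrm{true}$ a witnessing last vertex $v$, and then backtrack from $X=V$ down to $\emptyset$, reading off the vertices in reverse order of removal; this adds only polynomial overhead.

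The main obstacle, and the crux of the correctness argument, is justifying that goodness depends only on the \emph{set} of predecessors and not on their order, so that indexing the table by subsets loses no information; this is exactly the content of Lemma~\ref{lem:good}, which reduces the global feasibility condition to a local, order-independent condition checked against the prefix set. Once that reduction is in hand, the dynamic program and its complexity analysis are routine, so I would present the correctness as a direct consequence of Lemma~\ref{lem:good} applied to the prefix interpretation of $\mathrm{Reach}$.
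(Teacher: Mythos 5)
Your proposal is correct and takes essentially the same approach as the paper: both rest on Lemma~\ref{lem:good} making the feasibility of a prefix depend only on its underlying vertex set, and then perform dynamic programming over all $2^n$ subsets with a polynomial-time goodness check per transition. The paper merely phrases this computation as reachability (via breadth-first search) in a DAG on states $(v,X)$ with $v$ good for $X$, which is equivalent to your table $\mathrm{Reach}[X]$ with stored witnesses for reconstructing the ordering.
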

\begin{proof}
Let $(V,E^+,E^-)$ be the given \lce instance. Let $W=\{(v,X): v \hbox{ is good for }X\}$. Let us construct a directed graph $D=(W,F)$, where $((v,X),(v',X'))\in F$ if and only if $X'=X\cup\{v\}$. As recognizing being good is clearly a polynomial time operation, the graph $D$ can be constructed in $\Ohstar(2^n)$ time and has that many vertices and edges. Observe that by Lemma~\ref{lem:good} there is a feasible ordering $\pi$ if and only if some sink $(v,V\setminus\{v\})$ is reachable from some source $(u,\emptyset)$; indeed, such a path corresponds to introducing the vertices of $V$ one by one in such a manner that each of them is good for the respective prefix. Reachability of any sink from any source can be, however, tested in time linear in the size of the graph using a breadth-first search. The search can also reconstruct the path in the same complexity, thus constructing the feasible solution. 
\end{proof}

\section{Conclusions}\label{sec:conc}

In this paper we addressed a number of problems raised by Kermarrec and Thraves in~\cite{francuzi} for embeddability of a signed graph into a line. We refined their study of the case of a complete signed graph by showing relation with proper interval graphs. Moreover, we have proven $NP$-hardness of the general case and shown an almost complete picture of its complexity.

Although the general case of the problem appears to be hard, real-life social networks have a certain structure.
Is it possible to develop faster, maybe even polynomial-time algorithms for classes of graphs reflecting this structure?
Can we make use of good combinatorial or spectral behaviour of real-life instances?

\bibliographystyle{plain}
\bibliography{line-cluster-embedding}

\begin{thebibliography}{10}

\bibitem{signed0}
T.~Antal, P.~L. Krapivsky, and S.~Redner.
\newblock Dynamics of social balance on networks.
\newblock {\em Phys. Rev. E}, 72(3):036121, 2005.

\bibitem{martin}
R.~Belmonte and M.~Vatshelle.
\newblock Graph classes with structured neighborhoods and algorithmic
  applications.
\newblock In {\em WG}, page to appear, 2011.

\bibitem{signed1}
D.~Cartwright and F.~Harary.
\newblock Structural balance: a generalization of heider's theory.
\newblock {\em Psychological Review}, 63(5):277--293, 1956.

\bibitem{pr-int-recognition}
Derek~G. Corneil, Hiryoung Kim, Sridhar Natarajan, Stephan Olariu, and Alan~P.
  Sprague.
\newblock Simple linear time recognition of unit interval graphs.
\newblock {\em Inf. Process. Lett.}, 55(2):99--104, 1995.

\bibitem{signed2}
J.~A. Davis.
\newblock Clustering and structural balance in graphs.
\newblock {\em Human Relations}, 20(2):181, 1967.

\bibitem{garey-johnson}
M.~R. Garey and D.~S. Johnson.
\newblock {\em Computers and Intractability: A Guide to the Theory of
  {NP}-Completeness}.
\newblock New York: W.H. Freeman, 1979.

\bibitem{proper:hamilton}
Louis Ibarra.
\newblock A simple algorithm to find hamiltonian cycles in proper interval
  graphs.
\newblock {\em Inf. Process. Lett.}, 109(18):1105--1108, 2009.

\bibitem{seth}
Russell Impagliazzo and Ramamohan Paturi.
\newblock On the complexity of k-{SAT}.
\newblock {\em J. Comput. Syst. Sci.}, 62(2):367--375, 2001.

\bibitem{sparsification}
Russell Impagliazzo, Ramamohan Paturi, and Francis Zane.
\newblock Which problems have strongly exponential complexity?
\newblock {\em J. Comput. Syst. Sci.}, 63(4):512--530, 2001.

\bibitem{proper:longestpath}
Kyriaki Ioannidou, George~B. Mertzios, and Stavros~D. Nikolopoulos.
\newblock The longest path problem is polynomial on interval graphs.
\newblock In Rastislav Kr{\'a}lovic and Damian Niwinski, editors, {\em MFCS},
  volume 5734 of {\em Lecture Notes in Computer Science}, pages 403--414.
  Springer, 2009.

\bibitem{francuzi}
Anne-Marie Kermarrec and Christopher Thraves.
\newblock Can everybody sit closer to their friends than their enemies?
\newblock In Filip Murlak and Piotr Sankowski, editors, {\em MFCS}, volume 6907
  of {\em Lecture Notes in Computer Science}, pages 388--399. Springer, 2011.

\bibitem{signed2ipol}
J{\'e}r{\^o}me Kunegis, Stephan Schmidt, Andreas Lommatzsch, J{\"u}rgen Lerner,
  Ernesto William~De Luca, and Sahin Albayrak.
\newblock Spectral analysis of signed graphs for clustering, prediction and
  visualization.
\newblock In {\em SDM}, pages 559--. SIAM, 2010.

\bibitem{signed3}
Jure Leskovec, Daniel~P. Huttenlocher, and Jon~M. Kleinberg.
\newblock Governance in social media: A case study of the wikipedia promotion
  process.
\newblock In William~W. Cohen and Samuel Gosling, editors, {\em ICWSM}. The
  AAAI Press, 2010.

\bibitem{signed4}
Jure Leskovec, Daniel~P. Huttenlocher, and Jon~M. Kleinberg.
\newblock Predicting positive and negative links in online social networks.
\newblock In Michael Rappa, Paul Jones, Juliana Freire, and Soumen Chakrabarti,
  editors, {\em WWW}, pages 641--650. ACM, 2010.

\bibitem{signed5}
Jure Leskovec, Daniel~P. Huttenlocher, and Jon~M. Kleinberg.
\newblock Signed networks in social media.
\newblock In Elizabeth~D. Mynatt, Don Schoner, Geraldine Fitzpatrick, Scott~E.
  Hudson, W.~Keith Edwards, and Tom Rodden, editors, {\em CHI}, pages
  1361--1370. ACM, 2010.

\bibitem{proper:kcluster}
George~B. Mertzios.
\newblock A polynomial algorithm for the k-cluster problem on the interval
  graphs.
\newblock {\em Electronic Notes in Discrete Mathematics}, 26:111--118, 2006.

\bibitem{signed6}
M.~Szell, R.~Lambiotte, and S.~Thurner.
\newblock Multirelational organization of large-scale social networks in an
  online world.
\newblock {\em PNAS}, 107(31):13636--13641, 2010.

\end{thebibliography}

\end{document}